\newcommand{\bs}[1]{\boldsymbol{#1}}
\newcommand{\bk}[1]{\left(#1\right)}
\newcommand{\Bk}[1]{\left[#1\right]}
\newcommand{\BK}[1]{\left\{#1\right\}}
\newcommand{\trace}{\operatorname{tr}}
\newcommand{\expect}{\operatorname{\mathbb E}}
\newtheorem{lemma}{Lemma}
\newtheorem{theorem}{Theorem}
\newtheorem{cor}{Corollary}
\theoremstyle{remark}
\newtheorem*{remark}{Remark}
\begin{document}

\sloppy

\title{Mismatched Quantum Filtering and Entropic Information} 

\author{
  \IEEEauthorblockN{Mankei Tsang\IEEEauthorrefmark{1}\IEEEauthorrefmark{2}}
  \IEEEauthorblockA{\IEEEauthorrefmark{1}
    Department of Electrical \& Computer Engineering, 
National University of Singapore\\
4 Engineering Drive 3, Singapore 117583} 
  \IEEEauthorblockA{\IEEEauthorrefmark{2}
Department of Physics, National University of Singapore\\
2 Science Drive 3, Singapore 117551\\
    Email: eletmk@nus.edu.sg} 
}



\maketitle

\begin{abstract}
  Quantum filtering is a signal processing technique that estimates
  the posterior state of a quantum system under continuous
  measurements and has become a standard tool in quantum information
  processing, with applications in quantum state preparation, quantum
  metrology, and quantum control. If the filter assumes a nominal
  model that differs from reality, however, the estimation accuracy is
  bound to suffer.  Here I derive identities that relate the excess
  error caused by quantum filter mismatch to the relative entropy
  between the true and nominal observation probability measures, with
  one identity for Gaussian measurements, such as optical homodyne
  detection, and another for Poissonian measurements, such as photon
  counting. These identities generalize recent seminal results in
  classical information theory and provide new operational meanings to
  relative entropy, mutual information, and channel capacity in the
  context of quantum experiments.
\end{abstract}

\section{Introduction}
Long regarded as an afterthought in the development of quantum
mechanics, the probabilistic nature of quantum measurements is now
taking the center stage in theoretical and experimental physics
\cite{haroche_nobel,wineland_nobel}. Quantum probability theory will
inevitably play a more prominent role in not just fundamental science
but also future technology, which will require increasingly precise
estimation and control of physical devices in the quantum regime.

Most of the current quantum information processing technology relies
on continuous electromagnetic fields to measure and control quantum
devices. The Bayesian quantum filtering theory, pioneered by Belavkin
\cite{belavkin}, enables one to estimate the state of a quantum system
from a continuous field measurement record and has therefore become a
standard tool in the area. The theory is applicable to a wide range of
current experiments, including those on atoms, mechanical oscillators,
or superconducting circuits interacting with optical or microwave
fields \cite{wiseman_milburn}. Foreseeable applications include, but
are certainly not limited to, quantum state preparation, quantum error
correction, quantum metrology, and fundamental tests of quantum
mechanics \cite{wiseman_milburn,smooth,hypothesis,testing_quantum}.

From a decision-theoretic point of view, the Bayesian theory is
optimal only if the model perfectly matches the reality. In practical
situations, however, assumptions and approximations must be made, and
the mismatch between the model and the reality will introduce excess
systematic errors. General theoretical results concerning mismatched
estimation are highly desirable for practical filter design purposes
but difficult to obtain, especially if the dynamics is nonlinear.  In
this regard, a few interesting identities that relate mismatched
estimation to relative entropy for classical Gaussian or Poissonian
channels have recently been discovered \cite{verdu,weissman,atar,no},
building upon earlier seminal work that relates estimation theory to
Shannon mutual information \cite{duncan_mi,guo,guo08}. These relations
open up novel research directions and have already proved useful for
deriving a variety of new results, as they enable a fresh attack on
estimation problems using information-theoretic tools, and vice versa.

In this paper, I generalize two of these identities to the quantum
regime and relate mismatched quantum filtering errors to relative
entropy for continuous Gaussian or Poissonian measurements.
Given the plethora of new results that have since been spawned from
the classical relations, the quantum relations are envisioned to be
similarly fruitful in both quantum estimation theory and quantum
information theory and ultimately useful for quantum filter design.

\section{\label{filter}Mismatched Quantum Filtering}
For Gaussian measurements, such as homodyne detection of an optical
beam interacting with a quantum system, define
\begin{align}
Y_t \equiv \BK{y_\tau,0\le \tau \le t}
\end{align}
as the observation record up to time $t$. The posterior statistics of
the quantum system can be determined from the linear Belavkin equation
\cite{wiseman_milburn,belavkin05,bouten}, a quantum generalization of
the Duncan-Mortensen-Zakai equation:
\begin{align}
df_t(Y_t) &\equiv f_{t+dt}(Y_{t+dt})-f_t(Y_t) \nonumber\\
&= \mathcal L_t f_t(Y_t)dt + 
\frac{1}{2}\Bk{a_t f_t(Y_t) + f_t(Y_t)a_t^\dagger}dy_t,
\label{belavkin}
\end{align}
where $f_t(Y_t)$ is the unnormalized posterior density operator in the
Hilbert space for the quantum system, $a_t$ is an operator that
characterizes the interaction between the system and the probe, such
that
\begin{align}
q_t &\equiv \frac{1}{2} \bk{a_t+a_t^\dagger}
\end{align}
is the system observable being measured, $\mathcal L_t$ is a Lindblad
superoperator that describes the system dynamics, including the effect
of measurement backaction as a function of $a_t$, and $dy_t$ is the
increment of the observation process defined as $dy_t \equiv y_{t+dt}
- y_t$, with $dy_t^2 = dt$. The initial condition is given by the
initial density operator $\rho_0$:
\begin{align}
f_0 = \rho_0.
\end{align}
Measurement-based feedback control can be modeled by making
$(a_t,\mathcal L_t)$ depend on $Y_t$. 

The expectation of a function $g(O_t,Y_t)$ in terms of any observable
$O_t$ is given by
\begin{align}
\expect g(O_t,Y_t) &= \int dP_0(Y_t) \trace f_t(Y_t) g(O_t,Y_t),
\label{estimate}
\end{align}
where $dP_0(Y_t)$ is the probability measure for the standard Wiener
process.  The probability measure of an observation record is thus
\begin{align}
dP(Y_t) &= \expect \bs 1_{Y_t} = dP_0(Y_t) \trace f_t(Y_t),
\label{wiener}
\end{align}
where $\bs 1_{Y_t}$ is the indicator function.  The conditional
expectation of an observable $O_t$ is given by
\begin{align}
\expect\bk{O_t|Y_t} &= \frac{\expect(O_t \bs 1_{Y_t})}{\expect \bs 1_{Y_t}}
= \frac{\trace f_t(Y_t) O_t}{\trace f_t(Y_t)}
= \trace \rho_t(Y_t) O_t,
\label{condexp}
\end{align}
with the normalized posterior density operator given by
\begin{align}
\rho_t(Y_t)  &= \frac{f_t(Y_t)}{\trace f_t(Y_t)}.
\end{align}

Define a filtering estimator of the observable $q_t$ as $\check
q_t(Y_t)$.  A common measure of the filtering error is
\begin{align}
\textrm{cmse}_t &\equiv \expect \Bk{q_t-\check q_t(Y_t)}^2,
\label{mse}
\end{align}
where cmse is short for causal mean-square error. It is not difficult
to show that the quantum conditional expectation of $q_t$ minimizes
$\textrm{cmse}_t$ \cite{bouten}, analogous to the classical case:
\begin{align}
  \textrm{cmmse}_t &\equiv \inf_{\check q_t(Y_t)}\textrm{cmse}_t
  = \expect \Bk{q_t-\expect\bk{q_t|Y_t}}^2.
\label{mmse}
\end{align}
The amount of error in excess of the minimum value is called regret in
decision theory \cite{berger}. For mismatched quantum filtering with
Gaussian measurements, I define a regret quantity as the excess
mean-square error integrated over time:
\begin{align}
\Pi \equiv 
\frac{1}{2}\int_{0}^{T} dt \bk{\textrm{cmse}_t - \textrm{cmmse}_t},
\label{regret}
\end{align}
where the factor of $1/2$ is for later technical convenience.

\section{\label{hypothesis}Quantum Hypothesis Testing}
Consider now a different statistical problem: the discrimination of
two quantum models via continuous Gaussian measurements. A central
quantity in this binary hypothesis testing problem is the likelihood
ratio, defined as
\begin{align}
\Lambda(Y_T) &\equiv \frac{dP(Y_T)}{dP'(Y_T)},
\end{align}
where $dP(Y_T)$ is the probability measure of $Y_T$ assuming the first
model and the prime denotes the same quantity but assuming the second
model. Eq.~(\ref{wiener}) enables one to relate $\Lambda(Y_t)$ to the
quantum filters as
\begin{align}
\Lambda(Y_T) &= \frac{\trace f_T(Y_T)}{\trace f_T'(Y_T)},
\label{lr}
\end{align}
where $f'$ obeys another linear Belavkin equation that assumes the
second model:
\begin{align}
df_t'(Y_t) 
&= \mathcal L_t' f_t'(Y_t)dt + 
\frac{1}{2}\Bk{a_t' f_t'(Y_t) + f_t'(Y_t)a_t'^\dagger}dy_t,
\label{wrong_belavkin}
\end{align}
with the measured observable defined as
\begin{align}
q_t' &\equiv \frac{1}{2}\bk{a_t'+a_t'^\dagger},
\end{align}
and the initial condition given by
\begin{align}
f_0' = \rho_0'.
\end{align}
The conditional expectation assuming the second model becomes
\begin{align}
\expect'\bk{O_t'|Y_t} &= \frac{\trace f_t'(Y_t)O_t'}{\trace f_t'(Y_t)}
= \trace \rho_t'(Y_t)O_t'.
\end{align}
The following identity, first derived in Ref.~\cite{hypothesis} and
generalizing a classical result by Duncan \cite{duncan_lr}, will be
useful:
\begin{lemma}\label{lem_duncan}
  The log-likelihood ratio for two quantum models under continuous
  Gaussian measurements satisfies
\begin{align}
\ln \Lambda(Y_T) &= \int_{0}^T dy_t \Bk{\expect\bk{q_t|Y_t}-\expect'\bk{q_t'|Y_t}}
\nonumber\\&\quad
-\frac{1}{2}\int_{0}^T dt 
\Bk{\expect^2\bk{q_t|Y_t}-\expect'^2\bk{q_t'|Y_t}},
\label{duncan}
\end{align}
where $\expect\bk{q_t|Y_t}$ and $\expect'\bk{q_t'|Y_t}$ are the
filtering conditional expectations of the measured observable under
the two models.
\end{lemma}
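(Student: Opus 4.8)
The plan is to use Eq.~(\ref{lr}) to write the log-likelihood ratio as a difference of two terms of identical structure,
\begin{align}
\ln\Lambda(Y_T) = \ln\trace f_T(Y_T) - \ln\trace f_T'(Y_T),
\end{align}
and then to derive a stochastic differential equation for $\ln\trace f_t(Y_t)$ directly from the linear Belavkin equation~(\ref{belavkin}) by Itô calculus. Since the primed quantity obeys the structurally identical equation~(\ref{wrong_belavkin}), it suffices to treat the unprimed term and subtract the primed analogue at the end.

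First I would take the trace of Eq.~(\ref{belavkin}) to obtain an equation for the scalar process $Z_t \equiv \trace f_t(Y_t)$. The drift term vanishes because a Lindblad generator is trace-preserving, i.e.\ $\trace \mathcal{L}_t f_t(Y_t) = 0$. For the diffusion term, cyclicity of the trace together with $q_t = \frac{1}{2}(a_t+a_t^\dagger)$ gives $\frac{1}{2}\trace\Bk{a_t f_t + f_t a_t^\dagger} = \trace(f_t q_t)$, which by the definition~(\ref{condexp}) of the conditional expectation equals $Z_t\,\expect\bk{q_t|Y_t}$. Hence $dZ_t = Z_t\,\expect\bk{q_t|Y_t}\,dy_t$, a purely multiplicative diffusion.

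Next I would apply Itô's lemma to $\ln Z_t$, keeping the quadratic correction $-\frac{1}{2}(dZ_t)^2/Z_t^2$ and using $dy_t^2 = dt$, to obtain
\begin{align}
d\ln Z_t = \expect\bk{q_t|Y_t}\,dy_t - \frac{1}{2}\expect^2\bk{q_t|Y_t}\,dt.
\end{align}
Integrating from $0$ to $T$ and using the normalized initial condition $Z_0 = \trace\rho_0 = 1$, so that $\ln Z_0 = 0$, yields the unprimed contribution $\int_0^T \expect\bk{q_t|Y_t}\,dy_t - \frac{1}{2}\int_0^T \expect^2\bk{q_t|Y_t}\,dt$; subtracting the structurally identical primed expression then reproduces Eq.~(\ref{duncan}).

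The main obstacle is more a matter of care than of genuine difficulty: the Itô correction $-\frac{1}{2}\expect^2\bk{q_t|Y_t}\,dt$ must be tracked correctly, since it arises entirely from the nonlinearity of the logarithm acting on a diffusion with $dy_t^2 = dt$ and is precisely what produces the quadratic integrand in the second line of~(\ref{duncan}). I would also verify at the outset that $\trace\mathcal{L}_t f_t = 0$ for the Lindblad generator in question, which follows from its standard commutator-plus-dissipator form, so that the entire deterministic part of the trace dynamics drops out and no further drift correction survives in $d\ln Z_t$.
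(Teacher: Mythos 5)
Your proof is correct and takes essentially the same route as the paper's: trace the linear Belavkin equations to obtain the multiplicative diffusion $d\trace f_t = \expect\bk{q_t|Y_t}\,dy_t\,\trace f_t$ (using the tracelessness of the Lindblad generator), apply It\^{o}'s lemma to $\ln\trace f_t$ with $dy_t^2=dt$, integrate, and subtract the primed analogue. Your write-up merely spells out a few steps the paper leaves implicit, namely the cyclicity-of-trace identity giving $\trace(f_t q_t)$ and the initial condition $\trace\rho_0=1$.
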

\begin{proof}
  Tracing over Eqs.~(\ref{belavkin}) and (\ref{wrong_belavkin}), one
  obtains $\trace df_t = d\trace f_t = \expect\bk{q_t|Y_t} dy_t \trace
  f_t$ and $d\trace f_t' = \expect'\bk{q_t'|Y_t} dy_t \trace f_t'$, as
  the trace of a Lindblad superoperator on any operator is zero.
  It\={o} calculus can then be used to compute $d \ln \trace f_t =
  d\trace f_t/\trace f_t -(d\trace f_t/\trace f_t)^2/2 =
  \expect\bk{q_t|Y_t} dy_t-\expect^2\bk{q_t|Y_t}dt/2$, where the last
  step uses $dy_t^2 = dt$. A similar formula can be derived for $d\ln
  \trace f_t'$. Integrating $d\ln \trace f_t$ and $d\ln \trace f_t'$
  over time and plugging them into Eq.~(\ref{lr}) results in
  Eq.~(\ref{duncan}).
\end{proof}
The relative entropy between the two probability measures is defined
as the expectation of the log-likelihood ratio $\ln \Lambda(Y_T)$
assuming the first model:
\begin{align}
D(dP||dP') &\equiv \expect \ln \Lambda(Y_T),
\label{kl}
\end{align}
which is a well known information quantity relevant to many
statistical applications \cite{cover}.


\section{\label{formula} Filter Regret
and Relative Entropy}
The first main result of this paper is the following theorem,
generalizing a classical result by Weissman \cite{weissman}:
\begin{theorem}\label{thm_gauss}
  For continuous Gaussian measurements, the regret for mismatched
  quantum filtering is equal to the relative entropy between the 
  true and nominal observation probability measures; viz.,
\begin{align}
\Pi &= D(dP||dP').
\end{align}
\end{theorem}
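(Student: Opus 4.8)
The plan is to evaluate both $\Pi$ and $D(dP||dP')$ separately and show that each collapses to the single time-integrated quantity $\half\int_0^T dt\,\expect\bk{\expect\bk{q_t|Y_t}-\expect'\bk{q_t'|Y_t}}^2$. Throughout I take the mismatched filter to report the nominal-model estimate, $\check q_t(Y_t)=\expect'\bk{q_t'|Y_t}$, which is the choice that makes the filtering genuinely mismatched and that ties the regret to $dP'$. For brevity write $\hat q_t\equiv\expect\bk{q_t|Y_t}$ and $\check q_t\equiv\expect'\bk{q_t'|Y_t}$, both scalar functions of $Y_t$.

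\emph{Reducing the regret.} Expanding the squares in $\textrm{cmse}_t=\expect\bk{q_t-\check q_t}^2$ and $\textrm{cmmse}_t=\expect\bk{q_t-\hat q_t}^2$, the $\expect q_t^2$ terms are identical and cancel. The cross terms are handled by a quantum orthogonality principle: for any scalar $h(Y_t)$, pulling $h$ out of the trace in Eq.~(\ref{estimate}) gives $\expect\Bk{q_t h(Y_t)}=\int dP_0(Y_t)\,h(Y_t)\trace f_t(Y_t) q_t=\expect\Bk{\hat q_t h(Y_t)}$, since $\trace f_t q_t/\trace f_t=\hat q_t$. Taking $h=\check q_t$ and then $h=\hat q_t$ yields $\textrm{cmse}_t-\textrm{cmmse}_t=\expect\bk{\hat q_t-\check q_t}^2$, so that $\Pi=\half\int_0^T dt\,\expect\bk{\hat q_t-\check q_t}^2$.

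\emph{Expanding the relative entropy.} Applying $\expect$ to Lemma~\ref{lem_duncan} gives $D(dP||dP')=\expect\Bk{\int_0^T dy_t\,(\hat q_t-\check q_t)}-\half\int_0^T dt\,\expect\bk{\hat q_t^2-\check q_t^2}$. The crux is the stochastic-integral term: because $\expect$ weights $dP_0$ by $\trace f_t$ (recall $dP=dP_0\,\trace f_t$), the It\^o integral against $dy_t$ does not simply vanish. I therefore apply the It\^o product rule to $\trace f_t\,M_t$ with $M_t\equiv\int_0^t dy_s\,(\hat q_s-\check q_s)$, using $d\trace f_t=\hat q_t\,dy_t\,\trace f_t$ and $dy_t^2=dt$; the surviving $dt$ cross term is $\hat q_t(\hat q_t-\check q_t)\trace f_t\,dt$. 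Since $y_t$ is a standard Wiener process under $dP_0$, every $dy_t$ term has zero $dP_0$-expectation, leaving $\expect\Bk{\int_0^T dy_t\,(\hat q_t-\check q_t)}=\int_0^T dt\,\expect\Bk{\hat q_t(\hat q_t-\check q_t)}$. (Equivalently, this is the Girsanov/innovation statement that $dy_t-\hat q_t\,dt$ is a $dP$-Wiener process.)

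\emph{Combining.} Substituting back, $D(dP||dP')=\int_0^T dt\,\expect\Bk{\hat q_t^2-\hat q_t\check q_t}-\half\int_0^T dt\,\expect\bk{\hat q_t^2-\check q_t^2}=\half\int_0^T dt\,\expect\bk{\hat q_t-\check q_t}^2=\Pi$, which is the claim. The only genuinely delicate step is the stochastic-integral evaluation: it is where the quantum change of measure $dP=dP_0\,\trace f_t$ enters, and the product-rule argument is precisely what replaces the naive — and here incorrect — vanishing of a martingale increment under the true measure.
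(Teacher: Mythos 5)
Your proof is correct and follows essentially the same route as the paper's: both sides are reduced to $\tfrac{1}{2}\int_0^T dt\,\expect\bigl[\expect\bk{q_t|Y_t}-\expect'\bk{q_t'|Y_t}\bigr]^2$, the regret via the orthogonality principle and the relative entropy via Lemma~\ref{lem_duncan}. The one substantive difference is in the stochastic-integral term: the paper conditions on $Y_t$ and invokes the known martingale property of the quantum innovation process, $\expect\bigl[dy_t-\expect\bk{q_t|Y_t}dt\,\big|\,Y_t\bigr]=0$, whereas you rederive this fact from first principles by applying the It\^o product rule to $\trace f_t(Y_t)\,M_t$ under the reference Wiener measure $dP_0$ and picking up the cross-variation term from $d\trace f_t=\expect\bk{q_t|Y_t}\,dy_t\,\trace f_t$. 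That makes your argument slightly more self-contained (you effectively prove the innovations property rather than cite it), at the cost of needing the usual integrability conditions for the $dy_t$ integrals to be true $dP_0$-martingales. You also make explicit the assumption $\check q_t(Y_t)=\expect'\bk{q_t'|Y_t}$, which the paper leaves implicit in its computation of $\Pi$; this is the correct reading of the theorem.
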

\begin{proof}
  Substituting Eq.~(\ref{duncan}) in Lemma~\ref{lem_duncan} into
  Eq.~(\ref{kl}) and interchanging the order of integration and
  expectation,
\begin{align}
D(dP||dP') &= 
\int_{0}^T \expect\BK{dy_t \Bk{\expect\bk{q_t|Y_t}-\expect'\bk{q_t'|Y_t}}}
\nonumber\\&\quad
-\frac{1}{2}\int_{0}^T dt \expect\Bk{\expect^2\bk{q_t|Y_t}-\expect'^2\bk{q_t'|Y_t}}.
\end{align}
For the first expectation, one can use the orthogonality principle of
the conditional expectation to write
\begin{align}
&\quad 
\expect\BK{dy_t \Bk{\expect\bk{q_t|Y_t}-\expect'\bk{q_t'|Y_t}}}
\nonumber\\
&= 
\expect\BK{\expect\bk{dy_t|Y_t} \Bk{\expect\bk{q_t|Y_t}-\expect'\bk{q_t'|Y_t}}}
\\
&= \expect\BK{\expect\bk{q_t|Y_t}dt \Bk{\expect\bk{q_t|Y_t}-\expect'\bk{q_t'|Y_t}}},
\end{align}
where the second step follows from the martingale property of the
quantum innovation process $\expect\Bk{dy_t-\expect\bk{q_t|Y_t}dt|Y_t}
= 0$ \cite{belavkin,wiseman_milburn,belavkin05,bouten}.  This results
in \cite{vanhandel_thesis}
\begin{align}
D(dP||dP') &= \frac{1}{2} \int_{0}^T dt 
\expect \Bk{\expect\bk{q_t|Y_t}-\expect'\bk{q_t'|Y_t}}^2.
\end{align}
The regret given by Eq.~(\ref{regret}), on the other hand, is
\begin{align}
\Pi &= \frac{1}{2}
\int_{0}^T dt \expect \BK{\Bk{q_t-\expect'\bk{q_t'|Y_t}}^2
-\Bk{q_t-\expect\bk{q_t|Y_t}}^2}
\\
&= \frac{1}{2}
\int_{0}^T dt \expect 
\left\{
2q_t\Bk{\expect\bk{q_t|Y_t}-\expect'\bk{q_t'|Y_t}}
\right.\nonumber\\&\quad\left.
+\expect'^2\bk{q_t'|Y_t}-\expect^2\bk{q_t|Y_t}
\right\}
\\
&= \frac{1}{2}
\int_{0}^T dt \expect \left\{
2\expect\bk{q_t|Y_t}\Bk{\expect\bk{q_t|Y_t}-\expect'\bk{q_t'|Y_t}}
\right.\nonumber\\&\quad\left.
+\expect'^2\bk{q_t'|Y_t}-\expect^2\bk{q_t|Y_t}
\right\}
\label{step}
\\
&= \frac{1}{2}
\int_{0}^T dt \expect 
\Bk{\expect\bk{q_t|Y_t}-\expect'\bk{q_t'|Y_t}}^2
= D(dP||dP'),
\end{align}
where Eq.~(\ref{step}) uses the orthogonality principle for the
quantum conditional expectation 
$\expect\BK{\Bk{q_t-\expect\bk{q_t|Y_t}}g(Y_t)} = 0$ \cite{bouten},
which is valid for any $g(Y_t)$.
\end{proof}
Apart from the assumption of continuous Gaussian measurements,
Theorem~\ref{thm_gauss} is applicable to arbitrary time $T$ and rather
general quantum Markov models, which shall hereafter be denoted by
\begin{align}
\mathcal M &\equiv \BK{\rho_0,a_t,\mathcal L_t; 0\le t \le T},
\\
\mathcal M' &\equiv \BK{\rho_0',a_t',\mathcal L_t'; 0\le t \le T}.
\end{align}
The theorem is also applicable to adaptive models, if one makes
$(a_t,\mathcal L_t)$ and/or $(a_t',\mathcal L_t')$ depend on $Y_t$.

\section{\label{app}Implications}

\subsection{Bayes Quantum Filtering and Mutual Information}
Suppose that the model $\mathcal M$ is chosen from an ensemble
$\BK{d\pi(\theta),\mathcal M_\theta}$ parametrized by
$\theta$. The prior probability measure for $\theta$ is defined as
$d\pi(\theta)$, the expectation under which is denoted by
$\expect_\theta$. Assume that the true model has access to the exact
$\theta$, or $\mathcal M = \mathcal M_\theta$, such that
$\expect\bk{q_t|Y_t} = \expect\bk{q_t|Y_t,\theta}$ and $dP(Y_T) =
dP_\theta(Y_T)$, but the nominal model does
not. Theorem~\ref{thm_gauss} can then be used to relate the expected
regret for not knowing $\theta$ to the cross information:
\begin{align}
\expect_\theta\Pi &= \expect_\theta D(dP_\theta||dP').
\label{cross}
\end{align}
If the nominal model has access to $d\pi(\theta)$, the optimal filter
should be a Bayes estimator, and $\inf_{\mathcal M'}
\expect_\theta\Pi$ is the Bayes regret.  This turns out to be equal
to the mutual information:
\begin{cor}\label{mi}
  The Bayes ignorance regret is equal to the mutual information; viz.,
\begin{align}
\inf_{\mathcal M'} \expect_\theta\Pi 
= I(\theta;Y) \equiv \expect_\theta D(dP_\theta|| \expect_\theta dP_\theta).
\label{mi_eq}
\end{align}
\end{cor}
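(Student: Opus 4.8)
The plan is to reduce the corollary to an optimization over observation measures and then settle it with the standard decomposition of relative entropy. Starting from Eq.~(\ref{cross}), $\expect_\theta\Pi = \expect_\theta D(dP_\theta||dP')$, so it suffices to show that $\inf_{\mathcal M'}\expect_\theta D(dP_\theta||dP')$ equals the mutual information of Eq.~(\ref{mi_eq}). The objective depends on the nominal model $\mathcal M'$ only through its observation measure $dP'$, which suggests a two-step strategy: first establish a lower bound valid for every $\mathcal M'$, and then exhibit a model attaining it.

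First I would insert the marginal measure $\overline{dP}\equiv\expect_\theta dP_\theta$ by writing $\ln(dP_\theta/dP') = \ln(dP_\theta/\overline{dP}) + \ln(\overline{dP}/dP')$ inside the definition~(\ref{kl}) of relative entropy. Taking $\expect_\theta$ and interchanging the prior average with the $Y_T$ integration collapses the second term, since $\expect_\theta dP_\theta = \overline{dP}$, giving
\begin{align}
\expect_\theta D(dP_\theta||dP') &= I(\theta;Y) + D(\overline{dP}||dP').
\end{align}
This is the standard identity underlying the variational characterization of mutual information: the first term is exactly $I(\theta;Y)$ and is independent of $\mathcal M'$, while the second is a relative entropy, hence nonnegative and vanishing if and only if $dP' = \overline{dP}$. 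This at once yields the lower bound $\expect_\theta D(dP_\theta||dP')\ge I(\theta;Y)$ for every nominal model, with no appeal yet to the quantum structure.

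The remaining and, I expect, main obstacle is achievability: I must exhibit a nominal model $\mathcal M'$ whose observation measure is the marginal $\overline{dP}$, so that $D(\overline{dP}||dP')=0$ and the bound is saturated. The natural candidate is the Bayes filter, which I would realize by adjoining to the system a register $\mathcal K$ storing $\theta$ as a conserved classical label with orthonormal states $\ket{\theta}$: choose $\rho_0' = \int d\pi(\theta)\,\rho_0^{(\theta)}\otimes\ket{\theta}\bra{\theta}$ and take $a_t'$ and $\mathcal L_t'$ block-diagonal in $\theta$, acting as $a_t^{(\theta)}$ and $\mathcal L_t^{(\theta)}$ on the $\theta$ block. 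By linearity the Belavkin equation~(\ref{wrong_belavkin}) then decouples across blocks, so that $f_t'(Y_t) = \int d\pi(\theta)\,f_t^{(\theta)}(Y_t)\otimes\ket{\theta}\bra{\theta}$; tracing and applying Eq.~(\ref{wiener}) gives $dP' = \int d\pi(\theta)\,dP_\theta = \overline{dP}$ exactly. The points demanding the most care are all internal to this construction: verifying that the augmented model lies in the admissible class $\mathcal M'$, that block-diagonality (equivalently, conservation of $\theta$) survives the stochastic flow, and --- for continuous $\theta$ --- that the register is handled as a direct integral rather than a discrete sum. Combining achievability with the lower bound then yields $\inf_{\mathcal M'}\expect_\theta\Pi = I(\theta;Y)$.
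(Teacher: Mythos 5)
Your proposal is correct, but it takes a genuinely different route from the paper. The paper's proof is purely decision-theoretic and very short: it observes that the Bayes estimator $\expect'\bk{q_t'|Y_t} = \expect_\theta\Bk{\expect\bk{q_t|Y_t,\theta}|Y_t}$ minimizes $\expect_\theta\,\textrm{cmse}_t$ pointwise in $t$ (hence minimizes $\expect_\theta\Pi$), notes that this filter has $dP'=\expect_\theta dP_\theta$, and substitutes into Eq.~(\ref{cross}). You instead work entirely on the information-theoretic side of the identity: the decomposition $\expect_\theta D(dP_\theta||dP') = I(\theta;Y) + D(\expect_\theta dP_\theta||dP')$ gives the lower bound for free, and achievability is handled by an explicit block-diagonal augmented model with a classical $\theta$-register. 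The trade-off is instructive. The paper's argument is shorter and identifies the optimal estimator directly, but it silently assumes that the Bayes filter is realizable within the admissible class of quantum models $\mathcal M'$ --- a point the paper only addresses in passing later (in the proof of Corollary~\ref{minimax}, where it asserts that ``a Bayes filter with model $\mathcal M'^*$ \ldots exists''). Your construction makes that realizability explicit and verifiable, which is the genuinely nontrivial step in either approach; your lower-bound half, by contrast, uses no quantum structure at all and naturally anticipates the redundancy--capacity argument of Corollary~\ref{minimax}. The only caveats are the ones you already flag yourself (absolute continuity for the chain-rule decomposition, and the direct-integral treatment of a continuous $\theta$), neither of which is an obstacle here since all observation measures are absolutely continuous with respect to the Wiener measure $dP_0$.
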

\begin{proof}
  The Bayes filter that minimizes $\expect_\theta \textrm{cmse}_t$ and
  therefore $\expect_\theta\Pi$ is $\expect'\bk{q_t'|Y_t} =
  \expect_\theta\Bk{\expect\bk{q_t|Y_t,\theta}|Y_t}$, with $dP' =
  \expect_\theta dP_\theta$. Substituting $dP' = \expect_\theta
  dP_\theta$ into Eq.~(\ref{cross}) results in Eq.~(\ref{mi_eq}).
\end{proof}
\begin{remark}
  The classical relation between mutual information and filtering
  error \cite{duncan_mi} can be derived from Corollary~\ref{mi} by
  setting $\theta = \{q_\tau,0\le \tau \le T\}$ and noting that
  $\expect(q_t|Y_t,\theta) = q_t$ and $\textrm{cmmse}_t = 0$. In the
  quantum case, the history of an observable has questionable
  decision-theoretic meaning unless it is a quantum nondemolition
  observable \cite{yanagisawa}, but the more general
  Corollary~\ref{mi} still holds.
\end{remark}

Corollary~\ref{mi} gives a new operational meaning to mutual
information as a measure of parameter importance in quantum filtering:
high $I(\theta;Y)$ means more regret for not knowing $\theta$ and
$\theta$ is thus worth knowing in the context of filtering, while low
$I(\theta;Y)$ means less regret for ignoring $\theta$.

\subsection{Minimax Quantum Filtering and Channel Capacity}
If the prior $d\pi(\theta)$ is not known except that it belongs to a
certain set, one can consider the maximin regret $\sup_{d\pi}
\inf_{\mathcal M'} \expect_\theta\Pi$, which is the worst possible
Bayes regret. This is related to the channel capacity as a direct
result of Corollary~\ref{mi}:
\begin{cor}\label{capacity}
The maximin ignorance regret is equal to the channel capacity; viz.,
\begin{align}
\sup_{d\pi} \inf_{\mathcal M'}\expect_\theta\Pi &= C \equiv
\sup_{d\pi}  I(\theta;Y),
\end{align}
and the least-favorable prior is equal to the capacity-attaining
prior; viz.,
\begin{align}
\arg \sup_{d\pi} \inf_{\mathcal M'}\expect_\theta\Pi &= 
d\pi^*(\theta) \equiv \arg \sup_{d\pi} I(\theta;Y).
\end{align}
\end{cor}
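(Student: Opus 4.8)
The plan is to recognize that Corollary~\ref{capacity} follows almost immediately from Corollary~\ref{mi} by a purely variational argument, with no further quantum-mechanical input needed. Having established via Corollary~\ref{mi} that $\inf_{\mathcal M'}\expect_\theta\Pi = I(\theta;Y)$ for every fixed prior $d\pi$, I would simply apply $\sup_{d\pi}$ to both sides. Since the two expressions agree as functions of $d\pi$, their suprema over the admissible set of priors coincide, and likewise any maximizer of one is a maximizer of the other. This gives both displayed equations at once.

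More concretely, I would proceed in two short steps. First, for the value identity, I would write $\sup_{d\pi}\inf_{\mathcal M'}\expect_\theta\Pi = \sup_{d\pi} I(\theta;Y)$ by direct substitution of the equality from Corollary~\ref{mi} into the left-hand side; the right-hand side is then the definition of the channel capacity $C$. Second, for the argmax identity, I would note that because $\inf_{\mathcal M'}\expect_\theta\Pi$ and $I(\theta;Y)$ are literally the same functional of $d\pi$, the set of priors attaining the supremum is identical, so $\arg\sup_{d\pi}\inf_{\mathcal M'}\expect_\theta\Pi = \arg\sup_{d\pi} I(\theta;Y) = d\pi^*(\theta)$.

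The only genuine subtlety is interpretive rather than technical: the theorem statement writes $\sup_{d\pi}\inf_{\mathcal M'}$ (a maximin) but refers to it in prose as the ``worst possible Bayes regret,'' so I would be careful that the inner infimum over nominal models $\mathcal M'$ is exactly the one already evaluated in Corollary~\ref{mi}, namely the Bayes-optimal filter $\expect'\bk{q_t'|Y_t} = \expect_\theta\Bk{\expect\bk{q_t|Y_t,\theta}|Y_t}$ with $dP' = \expect_\theta dP_\theta$. Since that infimum was computed for arbitrary fixed $d\pi$, the substitution is legitimate regardless of which prior the outer supremum eventually selects.

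I do not anticipate a real obstacle here; the corollary is a formal consequence of an already-proven equality between two $d\pi$-dependent quantities. If anything, the main thing to guard against is an existence or attainment issue in the outer supremum, but since both sides are the same functional, any such issue affects them identically and does not threaten the stated equalities. The proof should therefore consist of little more than substituting Corollary~\ref{mi} and invoking the definition of channel capacity.
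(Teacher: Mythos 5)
Your proposal is correct and matches the paper's reasoning exactly: the paper gives no separate proof for Corollary~\ref{capacity}, stating only that it is ``a direct result of Corollary~\ref{mi},'' which is precisely the substitution-and-take-supremum argument you describe. Nothing further is needed.
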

Consider also the minimax regret $\inf_{\mathcal M'} \sup_{d\pi}
\expect_\theta\Pi$, which uses a minimax filter that minimizes the
worst possible regret. The channel-capacity connection can be
exploited to prove the following, similar to the classical result
\cite{no}:
\begin{cor}\label{minimax}
  The minimax and maximin ignorance regrets are equal and
  given by the channel capacity; viz.,
\begin{align}
\inf_{\mathcal M'} \sup_{d\pi} \expect_\theta\Pi &= 
\sup_{d\pi} \inf_{\mathcal M'}\expect_\theta\Pi = C,
\label{minimax_eq}
\end{align}
and the minimax filter is equivalent to the Bayes filter with the
least-favorable prior $d\pi^*(\theta)$.
\end{cor}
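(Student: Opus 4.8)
The plan is to establish the saddle-point identity~(\ref{minimax_eq}) by sandwiching $\inf_{\mathcal M'}\sup_{d\pi}\expect_\theta\Pi$ between two bounds. First I would invoke the elementary max--min inequality $\inf_{\mathcal M'}\sup_{d\pi}\expect_\theta\Pi \ge \sup_{d\pi}\inf_{\mathcal M'}\expect_\theta\Pi$, which holds for any function and any feasible sets. By Corollary~\ref{capacity} the right-hand side already equals $C$, so the whole problem reduces to proving the reverse inequality $\inf_{\mathcal M'}\sup_{d\pi}\expect_\theta\Pi \le C$.

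To obtain the reverse inequality it suffices to exhibit a single nominal model $\mathcal M'$ whose worst-case regret $\sup_{d\pi}\expect_\theta\Pi$ is at most $C$, since the infimum over $\mathcal M'$ can only be smaller. The natural candidate is the Bayes filter associated with the capacity-attaining prior $d\pi^*$ of Corollary~\ref{capacity}; by the proof of Corollary~\ref{mi} its output measure is the capacity-attaining marginal $dP'^* \equiv \int dP_\theta\, d\pi^*(\theta)$. Fixing $dP' = dP'^*$ in the cross-information identity~(\ref{cross}) gives $\expect_\theta\Pi = \expect_\theta D(dP_\theta||dP'^*)$ for \emph{every} prior $d\pi$ appearing in the outer expectation. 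Since this expression is linear in $d\pi$, its supremum over priors is attained at a point mass, so $\sup_{d\pi}\expect_\theta\Pi = \sup_\theta D(dP_\theta||dP'^*)$.

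The crux is then the Kuhn--Tucker (saddle-point) characterization of the capacity-attaining prior, as in the classical argument \cite{no}: for the optimal output measure $dP'^*$ one has $D(dP_\theta||dP'^*) \le C$ for every $\theta$, with equality on the support of $d\pi^*$. Granting this, $\sup_\theta D(dP_\theta||dP'^*) \le C$, and therefore $\inf_{\mathcal M'}\sup_{d\pi}\expect_\theta\Pi \le \sup_\theta D(dP_\theta||dP'^*) \le C$. Combined with the weak duality above, this forces equality throughout, yields the claimed chain~(\ref{minimax_eq}), and identifies the minimax filter with the Bayes filter under $d\pi^*$.

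The step I expect to be the main obstacle is justifying the Kuhn--Tucker condition in this continuous, infinite-dimensional measurement setting. It is the standard Gallager--Kemperman characterization in the finite-alphabet case, but here it presupposes existence of a capacity-attaining prior $d\pi^*$ (or at least a capacity-attaining output measure) and requires the variational argument that tilting $d\pi^*$ toward any fixed $\theta$ cannot increase $I(\theta;Y)$. I would either impose the regularity guaranteeing that $d\pi^*$ exists, or, failing existence, replace the point-attained supremum by a capacity-approaching sequence of priors and argue via an $\epsilon$-optimal output measure, taking $\epsilon \to 0$ at the end.
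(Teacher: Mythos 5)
Your proof is correct and arrives at the result by the same pivot as the paper --- the minimax filter is the Bayes filter under the capacity-attaining prior, and everything reduces to the identity $\inf_{dP'}\sup_{d\pi}\expect_\theta D(dP_\theta||dP') = C$ --- but you unpack that identity where the paper simply cites it. The paper's proof is a three-line chain: it invokes the redundancy-capacity theorem of \cite{cover,haussler} wholesale (both the value $C$ and the fact that the minimax output measure is $dP'^* = \expect_\theta^* dP_\theta$), observes that a Bayes filter realizing $dP'^*$ exists so that $\inf_{dP'}$ may be replaced by $\inf_{\mathcal M'}$, and converts divergence to regret via Theorem~\ref{thm_gauss}. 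Your route --- weak max--min duality for one direction, then the Kuhn--Tucker characterization $D(dP_\theta||dP'^*)\le C$ for all $\theta$ together with linearity of $\expect_\theta D(dP_\theta||dP'^*)$ in $d\pi$ for the other --- is precisely the standard proof of that redundancy-capacity theorem, so you have made self-contained what the paper outsources. What your version buys is transparency about where the saddle point comes from; what it costs is exactly the regularity worry you flag at the end. On that point, the general (Haussler-type) form of the redundancy-capacity theorem that the paper cites guarantees existence of the minimax \emph{output measure} $dP'^*$ even when no capacity-attaining prior exists, which is all that is actually needed to define the nominal filter; your proposed fallback via capacity-approaching priors and $\epsilon$-optimal output measures is the right repair if you insist on proving rather than citing this. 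One small step both you and the paper pass over quickly is the claim that the measure $dP'^*$ is realizable as the output of an admissible quantum filtering model $\mathcal M'^*$ (so that the infimum over $\mathcal M'$ reaches the infimum over $dP'$); this holds because the mixture model over $\theta$ with prior $d\pi^*$ is itself a valid (adaptive) quantum Markov model, but it deserves a sentence.
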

\begin{proof}
  The proof may be done by applying the minimax theorem \cite{berger}
  to quantum filtering, but here I shall use information
  theory instead. Let
$d\pi^*(\theta) \equiv \arg \sup_{d\pi} I(\theta;Y)$
be the capacity-attaining prior, the expectation under which is
denoted by $\expect^*_\theta$.  The redundancy-capacity theorem states
that \cite{cover,haussler}
\begin{align}
C &= \inf_{dP'} \sup_{d\pi} \expect_\theta D(dP_\theta||dP'),
\label{redundancy}
\end{align}
and the minimax $dP'$ is
\begin{align}
dP'^* \equiv \arg \inf_{dP'} \sup_{d\pi} \expect_\theta D(dP_\theta||dP')
&= \expect_\theta^* dP_\theta.
\end{align}
A Bayes filter with model $\mathcal M'^*$ and $dP'^*=\expect_\theta^*
dP_\theta$ exists, so $\inf_{dP'}$ in Eq.~(\ref{redundancy}) can be
replaced with $\inf_{\mathcal M'}$:
\begin{align}
C &= \inf_{\mathcal M'} \sup_{d\pi} \expect_\theta D(dP_\theta||dP')
=\inf_{\mathcal M'} \sup_{d\pi} \expect_\theta\Pi,
\end{align}
where the last step uses Theorem~\ref{thm_gauss}. Combining this with
Corollary~\ref{capacity} leads to Corollary~\ref{minimax}.
\end{proof}

\subsection{Quantum Information Bounds}
Perhaps the most remarkable property of Theorem~\ref{thm_gauss} is
that it relates the regret for mismatched quantum filtering to the
amount of information for binary hypothesis testing, such that a
limitation on one application implies a guaranteed performance for the
other.  Upper bounds on the filter regrets should be particularly
useful for robust quantum estimation and control design
\cite{james04,stockton,yamamoto09,szigeti} and proving the stability
of quantum filters \cite{vanhandel_thesis,vanhandel09}.

For example, suppose that that the two models share identical dynamics and
measurements and differ only in the initial conditions $\rho_0$ and
$\rho_0'$.  The observation probability measures can then be expressed
with respect to the same positive operator-valued measure (POVM)
$d\mu(Y_T)$ \cite{wiseman_milburn}:
\begin{align}
dP(Y_T) &= \trace \rho_0 d\mu(Y_T),
&
dP'(Y_T) &= \trace \rho_0' d\mu(Y_T),
\end{align}
and quantum upper bounds on the regrets can be obtained as follows:
\begin{cor}\label{qre}
  If the two models differ only in the initial conditions, the filter
  regret is bounded by the quantum relative entropy between the two
  initial density operators; viz.,
\begin{align}
\Pi &\le D(\rho_0||\rho_0') \equiv \trace \rho_0 \bk{\ln \rho_0-\ln \rho_0'}.
\end{align}
\end{cor}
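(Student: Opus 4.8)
The plan is to combine Theorem~\ref{thm_gauss}, which identifies the classical filter regret $\Pi$ with the relative entropy $D(dP||dP')$ between the two observation measures, with the data-processing inequality for quantum relative entropy. Since the two models $\mathcal M$ and $\mathcal M'$ share identical dynamics $\mathcal L_t$ and measurement operators $a_t$, the only difference lies in the initial states $\rho_0$ and $\rho_0'$. The entire experiment---the quantum evolution under the common Lindblad dynamics, the coupling to the probe field, and the continuous homodyne measurement yielding the record $Y_T$---can be viewed as a single (completely positive, trace-preserving) quantum channel $\Phi$ acting on the initial density operator and producing a classical probability measure over observation records. Concretely, $dP(Y_T) = \trace \rho_0 \, d\mu(Y_T)$ and $dP'(Y_T) = \trace \rho_0' \, d\mu(Y_T)$ express both measures through the same POVM $\{d\mu(Y_T)\}$, which is exactly the statement that they are the images of $\rho_0$ and $\rho_0'$ under a common quantum-to-classical channel.

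First I would invoke Theorem~\ref{thm_gauss} to write $\Pi = D(dP||dP')$, reducing the claim to the inequality $D(dP||dP') \le D(\rho_0||\rho_0')$. Second, I would recognize the map $\rho \mapsto \trace \rho \, d\mu(\cdot)$ as a quantum measurement channel, i.e.\ a CPTP map from density operators to classical probability measures, with $dP$ and $dP'$ the respective outputs. Third, I would apply the monotonicity (data-processing) inequality for quantum relative entropy: for any CPTP map $\Phi$ and any states $\rho,\rho'$, one has $D(\Phi(\rho)||\Phi(\rho')) \le D(\rho||\rho')$. Specializing $\Phi$ to the measurement channel and noting that the classical relative entropy between the output measures coincides with the quantum relative entropy between the (diagonal) output states gives $D(dP||dP') \le D(\rho_0||\rho_0')$, which is the desired bound.

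The main obstacle---though it is conceptual rather than computational---is to justify rigorously that performing the continuous measurement and recording $Y_T$ is indeed a legitimate CPTP map to which data processing applies, given the continuous-time, infinite-dimensional nature of the observation record and the subtleties of defining a POVM over sample paths of a Wiener process. Once the representation $dP(Y_T) = \trace \rho_0 \, d\mu(Y_T)$ is accepted as the output of a genuine quantum channel (which the excerpt supplies), the remaining steps are immediate invocations of standard information-theoretic facts: the monotonicity of quantum relative entropy under CPTP maps and the reduction of quantum relative entropy between commuting output operators to the classical Kullback--Leibler divergence. I would therefore devote most of the proof to cleanly stating the channel structure and then cite data processing to close the argument.
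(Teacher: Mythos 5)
Your proposal is correct and follows exactly the paper's own argument: invoke Theorem~\ref{thm_gauss} to write $\Pi = D(dP||dP')$ and then apply the monotonicity of quantum relative entropy under the measurement POVM $d\mu(Y_T)$ to get $D(dP||dP') \le D(\rho_0||\rho_0')$. The paper simply cites this second step as a known fact from quantum information theory, whereas you spell out the data-processing/CPTP-channel justification; the substance is identical.
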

\begin{proof}
  $\Pi = D(dP||dP')$ from Theorem~\ref{thm_gauss}, and
  it is known from quantum information theory that $D(dP||dP')\le
  D(\rho_0||\rho_0')$ for any $d\mu(Y_T)$ \cite{wilde13}.
\end{proof}
Corollary~\ref{qre} proves that the time-averaged regret $\Pi/T$ due
to a mismatched initial condition is guaranteed to decrease inversely
with time if $D(\rho_0||\rho_0')<\infty$.

Regrets due to ignorance can also be bounded by quantum information
quantities as follows:

\begin{cor}\label{holevo}
  If $dP(Y_T) = \trace \rho_{\theta} d\mu(Y_T)$ and $dP'(Y_T) = \trace
  \rho' d\mu(Y_T)$, the Bayes and minimax ignorance regrets are
  bounded by the Holevo information $\chi$; viz.,
\begin{align}
\inf_{\mathcal M'}
\expect_\theta \Pi &\le \chi\BK{d\pi(\theta),\rho_{\theta}}
\equiv \expect_\theta D(\rho_{\theta}||\expect_\theta  \rho_{\theta}),
\label{holevo_bound}\\
\inf_{\mathcal M'}\sup_{d\pi}
\expect_\theta \Pi &\le \sup_{d\pi} \chi\BK{d\pi(\theta),\rho_{\theta}}.
\label{minimax_holevo}
\end{align}
\end{cor}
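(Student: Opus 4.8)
The plan is to combine the mutual-information identities from Corollaries~\ref{mi} and \ref{minimax} with the monotonicity of quantum relative entropy under measurement, i.e.\ the same data-processing inequality already invoked in the proof of Corollary~\ref{qre}. No new machinery is needed beyond these two ingredients.

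First I would invoke Corollary~\ref{mi} to rewrite the Bayes ignorance regret as the classical mutual information carried by the measurement outcomes,
\begin{align}
\inf_{\mathcal M'}\expect_\theta\Pi = I(\theta;Y) = \expect_\theta D(dP_\theta||\expect_\theta dP_\theta).
\end{align}
Under the stated hypothesis $dP_\theta(Y_T)=\trace\rho_\theta d\mu(Y_T)$, the POVM $d\mu$ realizes a quantum-to-classical channel $\rho\mapsto\trace\rho\, d\mu(Y_T)$, and by linearity the averaged measure is $\expect_\theta dP_\theta=\trace(\expect_\theta\rho_\theta)d\mu(Y_T)$. Applying the data-processing inequality for relative entropy under this measurement channel \cite{wilde13} then gives, for each $\theta$,
\begin{align}
D(dP_\theta||\expect_\theta dP_\theta)\le D(\rho_\theta||\expect_\theta\rho_\theta).
\end{align}
Taking $\expect_\theta$ of both sides and recognizing the right-hand side as the Holevo information $\chi\BK{d\pi(\theta),\rho_\theta}$ yields the first bound~(\ref{holevo_bound}); this step is nothing but the Holevo bound, namely that the accessible information of any fixed measurement cannot exceed $\chi$.

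For the minimax bound~(\ref{minimax_holevo}) I would appeal to Corollary~\ref{minimax}, which identifies the minimax ignorance regret with the channel capacity $C=\sup_{d\pi}I(\theta;Y)$. Since the first bound establishes $I(\theta;Y)\le\chi\BK{d\pi(\theta),\rho_\theta}$ for every prior, taking the supremum over $d\pi$ on both sides preserves the inequality, so $C\le\sup_{d\pi}\chi\BK{d\pi(\theta),\rho_\theta}$, which is exactly~(\ref{minimax_holevo}). The only genuinely nontrivial ingredient is the measurement-monotonicity of relative entropy; everything else is bookkeeping with the identities already proved. I expect the main subtlety to lie in justifying the data-processing inequality for the continuous, infinite-dimensional outcome space $Y_T$ rather than a finite POVM, but Corollary~\ref{qre} already relies on the same fact in precisely this continuous-measurement setting, so the same justification carries over.
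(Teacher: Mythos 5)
Your proposal is correct and follows essentially the same route as the paper: both reduce (\ref{holevo_bound}) to Corollary~\ref{mi} plus the Holevo bound $I(\theta;Y)\le\chi$ (you merely unpack that bound as data processing of relative entropy under the measurement channel, where the paper cites it directly), and both obtain (\ref{minimax_holevo}) from Corollary~\ref{minimax} together with the first inequality.
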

\begin{proof}
  $\inf_{\mathcal M'}\expect_\theta\Pi = I(\theta,Y)$ from
  Corollary~\ref{mi} and the Holevo bound states that $I(\theta;Y) \le
  \chi$ for any POVM
  \cite{wilde13}. Eq.~(\ref{minimax_holevo}) follows
  from Corollary~\ref{minimax} and Eq.~(\ref{holevo_bound}).
\end{proof}

\section{\label{poisson}Poissonian Measurements}
The quantum filter for continuous Poissonian measurements, such as
photon counting of the optical probe beam, is similar to the Gaussian
case, except that the unnormalized filtering equation now reads
\cite{wiseman_milburn,bouten}
\begin{align}
df_t(Y_t) &= \mathcal L_t f_t(Y_t) dt +\Bk{a_t f_t(Y_t)a_t^\dagger -f_t(Y_t)}(dy_t-dt),
\end{align}
and the measured observable is now $q_t \equiv a_t^\dagger a_t$.  It
is not difficult to show that $dP(Y_t) = dP_0(Y_t)\trace f_t(Y_t)$,
where $dP_0(Y_t)$ is the probability measure of a standard Poisson
process with $\expect_0(dy_\tau|Y_\tau) = dt$.  The log-likelihood
ratio satisfies the following identity,
similar to Lemma~\ref{lem_duncan} (see the Supplemental Material of
Ref.~\cite{hypothesis} for a proof):
\begin{lemma}\label{lem_snyder}
The log-likelihood ratio for two quantum models under
continuous Poissonian measurements satisfies
\begin{align}
\ln \Lambda(Y_T) &= \int_{0}^T dy_t \ln \frac{\expect\bk{q_t|Y_t}}{\expect'\bk{q_t'|Y_t}}
\nonumber\\&\quad
-\int_{0}^T dt \Bk{\expect\bk{q_t|Y_t}-\expect'\bk{q_t'|Y_t}}.
\label{snyder}
\end{align}
\end{lemma}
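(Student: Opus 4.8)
The plan is to mirror the proof of Lemma~\ref{lem_duncan}, replacing the Gaussian stochastic calculus with its counting-process counterpart. First I would trace over the Poissonian filtering equation. Since the trace of any Lindblad superoperator vanishes and the cyclic property gives $\trace\bk{a_t f_t(Y_t)a_t^\dagger} = \trace\bk{a_t^\dagger a_t f_t(Y_t)} = \trace\bk{q_t f_t(Y_t)}$, one obtains
\begin{align}
d\trace f_t(Y_t) &= \Bk{\trace\bk{q_t f_t(Y_t)}-\trace f_t(Y_t)}\bk{dy_t-dt} \nonumber\\
&= \trace f_t(Y_t)\Bk{\expect\bk{q_t|Y_t}-1}\bk{dy_t-dt},
\end{align}
together with the analogous relation for $\trace f_t'(Y_t)$ in terms of $\expect'\bk{q_t'|Y_t}$.

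The only substantively new ingredient is the change-of-variables rule for $d\ln\trace f_t$. Here $y_t$ is a counting process with $dy_t\in\BK{0,1}$ and $dy_t^2 = dy_t$, so the naive second-order It\={o} expansion used in Lemma~\ref{lem_duncan} no longer closes and the full series must be resummed. Writing $\mu_t\equiv\expect\bk{q_t|Y_t}$, a count ($dy_t=1$) multiplies $\trace f_t$ by $\mu_t$ and hence adds $\ln\mu_t$ to $\ln\trace f_t$, while between counts the drift contributes $-\bk{\mu_t-1}dt$. The resulting It\={o} (Dol\'eans--Dade) formula is
\begin{align}
d\ln\trace f_t &= dy_t\ln\expect\bk{q_t|Y_t}-\Bk{\expect\bk{q_t|Y_t}-1}dt,
\end{align}
with the same structure for $d\ln\trace f_t'$. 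I would justify this by resumming the higher-order jump contributions: because $\bk{dy_t-dt}^n = dy_t$ to leading order for $n\ge2$, the coefficient of $dy_t$ collapses to $\sum_{n\ge1}(-1)^{n-1}\bk{\mu_t-1}^n/n = \ln\mu_t$, the exact logarithm of the multiplicative jump factor, while the $-\bk{\mu_t-1}dt$ drift survives from the linear compensator alone.

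Finally I would integrate both differentials over $\Bk{0,T}$, use the normalized initial data $\trace f_0 = \trace\rho_0 = 1$ and $\trace f_0' = \trace\rho_0' = 1$ so the lower limits vanish, and substitute into $\ln\Lambda(Y_T) = \ln\trace f_T-\ln\trace f_T'$ via Eq.~(\ref{lr}). The constant $+1$ terms from the two compensators cancel in the $dt$ integral, leaving exactly Eq.~(\ref{snyder}). I expect the main obstacle to be the rigorous treatment of this pure-jump It\={o} rule: confirming that $\ln\trace f_t$ acquires the full $\ln\expect\bk{q_t|Y_t}$ at each count, with the deterministic compensator supplying the $-\Bk{\expect\bk{q_t|Y_t}-1}dt$ drift, since this resummation is precisely where the Poissonian result departs structurally from the Gaussian Lemma~\ref{lem_duncan}.
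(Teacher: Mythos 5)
Your proposal is correct and follows exactly the route the paper intends (it defers the proof to the Supplemental Material of Ref.~\cite{hypothesis}, noting only that it parallels Lemma~\ref{lem_duncan}): trace the Poissonian filtering equation using cyclicity to get $d\trace f_t = \trace f_t\Bk{\expect\bk{q_t|Y_t}-1}(dy_t-dt)$, apply the jump-process It\={o} rule $dy_t^2=dy_t$ to obtain $d\ln\trace f_t = dy_t\ln\expect\bk{q_t|Y_t}-\Bk{\expect\bk{q_t|Y_t}-1}dt$, and subtract the primed version. Your resummation argument for the logarithmic jump coefficient is the standard and correct justification, so there is no gap.
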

To obtain a result analogous to Theorem~\ref{thm_gauss}, I follow
Atar and Weissman \cite{atar} and define the following loss function
instead of the quadratic criterion:
\begin{align}
l(q,\check q) &\equiv q \ln \frac{q}{\check q} - q + \check q.
\label{loss}
\end{align}
The mean-loss error of a causal estimate $\check q_t(Y_t)$ at time
$t$ becomes
\begin{align}
\textrm{cmle}_t &\equiv \expect l\bk{q_t,\check q_t(Y_t)}.
\end{align}
It is easy to show that the conditional expectation $\expect\bk{q_t|Y_t}$
minimizes this error as well, such that the minimum error is
\begin{align}
\textrm{cmmle}_t &\equiv \expect l\bk{q_t,\expect\bk{q_t|Y_t}},
\end{align}
and the regret can then be defined as
\begin{align}
\Pi_l \equiv \int_{0}^T dt \bk{\textrm{cmle}_t-\textrm{cmmle}_t}.
\label{regret_poisson}
\end{align}
The second main result of this paper thus follows naturally as a
generalization of a classical result by Atar and Weissman \cite{atar}:
\begin{theorem}\label{thm_poisson}
  For continuous Poissonian measurements,
\begin{align}
\Pi_l &= D(dP||dP').
\end{align}
\end{theorem}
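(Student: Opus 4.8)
The plan is to mirror the proof of Theorem~\ref{thm_gauss}: I would compute $D(dP||dP')$ and the regret $\Pi_l$ separately, show that each reduces to the time-integral of the loss function $l$ evaluated at the two filter outputs $\expect\bk{q_t|Y_t}$ and $\expect'\bk{q_t'|Y_t}$, and then conclude they are equal. The mismatched estimator here is the nominal filter output $\check q_t(Y_t) = \expect'\bk{q_t'|Y_t}$, so that $\textrm{cmle}_t = \expect l\bk{q_t,\expect'\bk{q_t'|Y_t}}$.

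First I would evaluate $D(dP||dP')=\expect\ln\Lambda(Y_T)$ by substituting Lemma~\ref{lem_snyder} and interchanging expectation with time-integration. The only delicate piece is the stochastic integral $\expect\int_0^T dy_t\ln[\expect\bk{q_t|Y_t}/\expect'\bk{q_t'|Y_t}]$. Since the logarithmic factor is $Y_t$-measurable, I would condition on $Y_t$ and invoke the martingale property of the Poisson innovation, $\expect\bk{dy_t|Y_t}=\expect\bk{q_t|Y_t}dt$ (the Poissonian counterpart of the innovation identity used in Theorem~\ref{thm_gauss}), to replace $dy_t$ by $\expect\bk{q_t|Y_t}dt$. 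Combined with the drift term of Lemma~\ref{lem_snyder}, this yields
\begin{align}
D(dP||dP')=\int_0^T dt\,\expect\,l\bk{\expect\bk{q_t|Y_t},\expect'\bk{q_t'|Y_t}},\nonumber
\end{align}
i.e.\ the relative entropy is exactly the expected loss between the two conditional rates.

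Next I would expand the regret. Writing $\hat q_t\equiv\expect\bk{q_t|Y_t}$ and $\check q_t\equiv\expect'\bk{q_t'|Y_t}$, the integrand is $\expect[l(q_t,\check q_t)-l(q_t,\hat q_t)]$; the $q_t\ln q_t$ terms cancel, leaving $\expect[q_t\ln(\hat q_t/\check q_t)-\hat q_t+\check q_t]$. Because $\ln(\hat q_t/\check q_t)$ is a function of $Y_t$, the orthogonality principle $\expect\{[q_t-\expect\bk{q_t|Y_t}]g(Y_t)\}=0$ lets me replace $q_t$ by $\hat q_t$ inside the first term, giving precisely $\expect\,l(\hat q_t,\check q_t)$. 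Integrating over $t$ reproduces the same expression obtained for $D(dP||dP')$, which closes the proof.

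The substantive point — and the only place requiring care — is the treatment of the $dy_t$ integral: one must use the correct Poisson innovation intensity $\expect\bk{q_t|Y_t}$ rather than the standard-process rate, just as the Gaussian proof relied on the martingale property of the quantum innovation. The rest is the algebraic observation that $l$ is the Bregman divergence of $x\ln x$, for which $\expect\,l(q,\check q)=\expect\,l(q,\hat q)+\expect\,l(\hat q,\check q)$ under the conditioning; this Pythagorean decomposition is what forces the regret and the relative entropy to coincide.
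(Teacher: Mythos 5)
Your proposal is correct and follows essentially the same route as the paper's own proof: take the expectation of Lemma~\ref{lem_snyder}, use the Poisson innovation martingale property $\expect\Bk{dy_t-\expect\bk{q_t|Y_t}dt|Y_t}=0$ to reduce $D(dP||dP')$ to $\int_0^T dt\,\expect\,l\bk{\expect\bk{q_t|Y_t},\expect'\bk{q_t'|Y_t}}$, then expand $\Pi_l$, cancel the $q_t\ln q_t$ terms, and apply the orthogonality principle to arrive at the same integrand. Your closing observation that this is a Bregman/Pythagorean decomposition for $x\ln x$ is a nice conceptual gloss but does not change the argument.
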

\begin{proof}
  The proof is similar to that for Theorem~\ref{thm_gauss}.  Taking
  the expectation of Eq.~(\ref{snyder}) in Lemma~\ref{lem_snyder} and
  noting the martingale property for Poissonian measurements given by
  $\expect\Bk{dy_t-\expect\bk{q_t|Y_t} dt|Y_t} = 0$
  \cite{barchielli_holevo}, the relative entropy can be written as
\begin{align}
D(dP||dP') = \int_{0}^T dt 
\expect l\bk{\expect\bk{q_t|Y_t},\expect'\bk{q_t'|Y_t}}.
\end{align}
The regret, on the other hand, is
\begin{align}
\Pi_l 
&= \int_{0}^T dt \expect 
\left\{
\Bk{q_t \ln \frac{q_t}{\expect'\bk{q_t'|Y_t}}-
q_t+\expect'\bk{q_t'|Y_t}}
\right.\nonumber\\&\quad\left.
-\Bk{q_t \ln \frac{q_t}{\expect\bk{q_t|Y_t}}-
q_t+\expect\bk{q_t|Y_t}}\right\}
\\
&= 
\int_{0}^T dt \expect \Bk{q_t \ln \frac{\expect\bk{q_t|Y_t}}{\expect'\bk{q_t'|Y_t}}
-\expect\bk{q_t|Y_t}+\expect'\bk{q_t'|Y_t}}
\\
&= 
\int_{0}^T dt \expect
\left[\expect\bk{q_t|Y_t} \ln \frac{\expect\bk{q_t|Y_t}}{\expect'\bk{q_t'|Y_t}}
-\expect\bk{q_t|Y_t}\right.
\nonumber\\&\quad
+\expect'\bk{q_t'|Y_t}\bigg]
\label{orth2}
\\
&=
\int_{0}^T dt 
\expect l\bk{\expect\bk{q_t|Y_t},\expect'\bk{q_t'|Y_t}}
=D(dP||dP'),
\end{align}
where Eq.~(\ref{orth2}) again follows from the orthogonality principle
for quantum conditional expectations.
\end{proof}
One direct consequence of Theorem~\ref{thm_poisson} is that
Corollaries~\ref{mi}--\ref{holevo} are also applicable to Poissonian
measurements, if we consider $\Pi_l$ instead of $\Pi$.

\section{Conclusion}
With Theorems~\ref{thm_gauss} and \ref{thm_poisson}, I have taken the
first step towards a quantum generalization of the fascinating
connections between estimation theory and Shannon information theory
for Gaussian and Poissonian channels.  The presented results are
envisioned to aid the study of quantum estimation and control
techniques for complex systems, as they enable one to analyze and
design quantum filters using techniques borrowed from information
theory.  Regardless of the potential applications, these new relations
between central quantities in quantum estimation and information
theory are bound to bring fresh insights to both areas.

\section*{Acknowledgment}
This work is supported by the Singapore National Research Foundation
under NRF Grant No.~NRF-NRFF2011-07.


\bibliographystyle{IEEEtran}
\bibliography{ISIT2014}

\end{document}